\begin{document}

\setcounter{page}{1} 

\newcommand{\gneg}{\neg} 
\newcommand{\sqc}{\mbox{\small \raisebox{0.0cm}{$\bigtriangleup$}}} 
\newcommand{\sqd}{\mbox{\small \raisebox{0.049cm}{$\bigtriangledown$}}} 
\newcommand{\ada}{\mbox{\Large $\sqcap$}} 
\newcommand{\ade}{\mbox{\Large $\sqcup$}} 

\newtheorem{theorem}{Theorem}[section]
\newtheorem{definition}[theorem]{Definition}\rm
\newtheorem{lemma}[theorem]{Lemma}
\newtheorem{corollary}[theorem]{Corollary}
\newtheorem{fact}[theorem]{Fact}
\newtheorem{proposition}[theorem]{Proposition}

\newenvironment{proof}{\noindent\bf Proof. \rm}{\mbox{}\hfill $\square$\vspace*{3mm}}

\title{A PSPACE-Complete First Order Fragment of Computability Logic}
\author{Matthew S. Bauer\\\\Department of Computer Science, Villanova University}
\date{}
\maketitle

\begin{abstract}
In a recently launched research program for developing logic as a formal theory of (interactive) computability, several very interesting logics have been introduced and axiomatized. These fragments of the larger \emph{Computability Logic} aim not only to describe \emph{what} can be computed, but also provide a mechanism for extracting computational algorithms from proofs. Among the most expressive and fundamental of these is CL4, shown in \cite{FTCII} to be (constructively) sound and complete with respect to the underlying computational semantics. Furthermore, the $\forall, \exists$-free fragment of CL4 was shown to be decidable in polynomial space. The present work extends this result and proves that this fragment is, in fact, PSPACE-complete.  

\end{abstract}

\section{Introduction}
Computability Logic (CoL), introduced by Japaridze in \cite{ICLI, GS}, is a research program for developing logic as a formal theory of interactive computability. Formulas in it are understood as (interactive) computational problems and logical operators represent operations on such entities. The goal of this program is to define a systematic way to answer the question ``what can be computed?" within the confines of a formal logical system. Computational problems are modelled as logical formulas through the use of game semantics. Each problem is understood as a game played between a machine $\top$ and its environment $\perp$, and a problem is seen as computable if there exists a machine that has an algorithmic winning strategy in the corresponding game. The closest predecessors to CoL are Hintikka's game-theoretic semantics \cite{HIN} and Blass's game semantics for Linear Logic \cite{LLGS}.

In line with its semantics, CoL introduces a rich set of logical connectives. Among those relevant to this paper are the propositional connectives $\neg$ (negation), $\vee$ (parallel disjunction), $\wedge$ (parallel conjunction), $\sqcup$ (choice disjunction) and $\sqcap$ (choice conjunction) as well as the ``choice" quantifiers $\ade$ and $\ada$. Intuitively, $\neg$ corresponds to switching the roles of the players $\top$ and $\perp$ in game to which it is applied: the game $\neg$A is obtained from A by turning $\top$'s legal moves and wins into legal moves and wins for $\perp$ and vice versa. Playing $A \wedge B$ or $A \vee B$ means simultaneously playing both of the games $A$ and $B$. Here, $\top$ wins the game $A \vee B$ if it wins (at least) one of the games $A$ or $B$. Similarly, $\perp$ wins the game $A \wedge B$ if it wins one of $A$ or $B$. 

The connective $\sqcup$ ($\sqcap$) corresponds to the player $\top$ ($\perp$) making a selection between two games. For example, the game \emph{Chess} $\sqcup$ \emph{Checkers} is one in which the machine ($\top$) chooses to play either \emph{Chess} or \emph{Checkers}. Symmetrically, in the game \emph{Chess} $\sqcap$ \emph{Checkers} it is the environment ($\perp$) that makes the decision. Assuming a universe of discourse of $\{0,1,2, ...\}$, $\ade x A(x)$ can be defined as $A(0) \sqcup A(1) \sqcup A(2) ...$ and $\ada x A(x)$ as $A(0) \sqcap A(1) \sqcap A(2) ...$. A telling example here is the game $\ada x \ade y (y = f(x))$, which represents computing the function $f(x)$. In this game, $\perp$ is required to make the first move and specify a value $m$ for $x$. $\top$ will then be required to make a counter-move corresponding to selecting a value $n$ for $y$. After these two moves, the game will be brought down to one of the form $n=f(m)$, which is won by the machine if true and by the environment if false.    

It should now be clear that many compelling problems can be modelled as games using the operators of CoL, where traditional computational problems form but a small fraction of such games. As a final example, consider the game $\ada x (p(x) \sqcup \neg p(x))$, where $p$ is some predicate. This game represents the problem of deciding $p$. Here, $\perp$ must specify some value $m$ for $x$, after which, $\top$ must select the true disjunct, i.e. either $p(m)$ or $\neg p(m)$ in order to win. A predicate can be seen to be decidable if and only if $\top$ has a winning strategy for this game. 

Several very interesting fragments of the larger Computability Logic have been introduced and axiomatized in recent works. Among the most expressive and fundamental of these is CL4, shown in \cite{FTCII} to be (constructively) sound and complete with respect to the underlying computational semantics. This is in the traditional sense that every formula provable in CL4 is valid and every valid formula is provable. Informally, a formula in CoL is said to be \emph{valid} if $\top$ has an algorithmic winning strategy in the game represented by that formula, regardless of how its atoms are interpreted. The soundness result for CL4 holds in an even stronger sense in that given a CL4 proof of a formula, we can effectively construct a machine that has a winning strategy in the game represented by that formula. This is because each proof of a logical formula encodes a solution to the problem that formula represents. It is, in particular, this (constructive) soundness result that allows us to shed light on answering the question of ``how" a problem can be computed.   

The $\forall, \exists$-free\footnote{Although lacking these operators, CL4$^-$ remains first order due to the presence of another, constructive sort of quantifiers $\ade$ and $\ada$.} fragment of CL4, which we call CL4$^-$, was shown in \cite{FTCII} to be decidable in polynomial space. The present work extends this result and proves that CL4$^-$ is, in fact, PSPACE-complete. Although our result points to the underlying difficulty in deciding this logic, it should be noted that CL4$^-$ is among the few decidable first-order logics with a natural semantics. Another decidable first order logic, and the closest relative (in its overall logical spirit) of CL4$^-$ is Multiplicative-Additive Linear Logic (MALL)\footnote{With the multiplicative operators of the latter being similar (but not identical) to the parallel operators of the former, and additive operators being similar to the choice operators.}. As proven in \cite{MALL}, the decision problem for first order MALL\footnote{The language of MALL contains function symbols, whereas the present version of CL4 does not. This difference, however, is not what creates the gap in complexity. As shown in \cite{PTA}, CL4 has a straightforward sound and complete extension whose language does contain function symbols. It would not be hard to show that this extension still remains in PSPACE. What creates the difference in complexities is the difference between the $\exists$-introduction rule of Linear Logic and the corresponding $\ade$-choose rule of CoL. In the bottom-up version of these rules, the former allows $x$ to be replaced by any term, while the latter only allows $x$ to be replaced by a variable or a constant. (In CoL a constant is not the same as a 0-ary function symbol.)} is NEXPTIME-hard.  

The inherent value of CoL extends well beyond the systematic study of interactive computational problems as its constructive qualities have proven useful in numerous applications. One of the most interesting of these applications has been in the development of applied theories based on CoL. Each of these theories relying heavily on CL4 as an important fragment of their underlying logic. In \cite{CL}, a CoL-based formal theory of arithmetic was introduced. This theory was shown to be sound and extensionally complete with respect to polynomial time computability. That is in the sense that every theorem in this theory represents an interactive number-theoretic computational problem with a polynomial time solution and every such problem is captured by some theorem. Similar theories for polynomial space and beyond have also been elaborated on in \cite{CLII}, \cite{CLIII}. 

Additional applications of CoL include knowledge base systems and systems for planning and action. Declarative and logic programming languages are also potential candidates for this currently expanding, list. A discussion of these applications has been given by Japaridze in section 10 of \cite{GS}.

\section{The Logic CL4$^-$}\label{cl4}

In the interest of self containment, what follows will provide a basic introduction to the logic CL4$^-$. Its axiomatization and auxiliary concepts are identical to those of CL4, the only difference being that formulas of the language of CL4 may contain an additional sort $\forall, \exists$ of quantifiers termed ``blind quantifiers", alien to the language of CL4$^-$. Both systems are analytic, which implies that CL4$^-$ is a conservative fragment of CL4 and hence it inherits the earlier mentioned soundness and completeness of the latter. For a complete treatment of the full CL4, see \cite{FTCI}, \cite{FTCII}. Although not necessary, it may also be helpful to have some formal acquaintance with the semantics of CoL before proceeding with the remainder of this paper. For rigorous definitions of these concepts, including formal definitions of CoL games and validity, the reader is advised to consult \cite{GS}.  

The languae of CL4$^-$ is made up of the operators $\neg , \vee, \wedge, \sqcap, \sqcup, \ade, \ada$, the logical letters $\top$ and $\perp$ as well as two --- \textbf{elementary} and \textbf{general} --- sorts of non-logical letters. Each such non-logical letter has a fixed arity and there is assumed to be an infinite amount of $n$-ary letters of either sort for every $n \geq 0$. Semantically, elementary letters represent what are called \emph{elementary games}. These are games in which no (legal) moves can be made, i.e. games that are automatically won or lost. As for general letters, they represent any, not necessarily elementary, games. We will use $p, q, r, s, ...$ and $P, Q, R, S, ...$ to denote elementary and general letters, respectively.  An \textbf{atom} of CL4$^-$ is defined as $L(t_1, ... , t_n)$ were L is any n-ary letter and $t_1, ..., t_n$ are terms, whereby a \textbf{term} is a variable $\{v_0, v_1, ... \}$ or a constant $\{0, 1, ... \}$. The \textbf{logical atoms} are $\top$ and $\perp$. Semantically, they are understood as (0-ary) elementary games won by $\top$ and $\perp$, respectively. 

\textbf{CL4$^-$-formulas} are constructed in the standard way by a combination of atoms and variables through the connectives $\neg , \vee, \wedge, \sqcap, \sqcup$ and the quantifiers $\ade, \ada$ with the restriction that $\neg$ can only be applied to (non-logical) atoms\footnote{Negated atoms do not count as atoms, so multiple negations are not allowed.}. This does not lead to any loss of expressiveness because, as is well known in CoL \cite{GS}, semantically De Morgan's laws (for either sort of conjunction and disjunction, as well as quantifiers) and double negation both hold. The definitions of free and bound occurrences of variables remain unchanged from classical first order logic and it should be understood that a variable in the scope of multiple quantifiers is bound to the one nearest it. For safety, we agree that, in what follows, ``formula" always means one where no variable has both free and bound occurrences. We say a CL4$^-$-formula is \textbf{general-base} if it does not contain any non-logical elementary letters. Likewise, a formula not containing any general letters is said to be \textbf{elementary-base}.     

A \textbf{positive occurrence} of an atom in a CL4$^-$-formula is an occurrence that is not in the scope of the $\neg$ operator. Otherwise the occurrence is \textbf{negative}. By a \textbf{surface occurrence} of a (sub)formula we mean a formula that is not is the scope of any choice connectives or quantifiers. A formula not containing any choice operators is said to be \textbf{choiceless} and if it additionally does not contain general letters it is then said to be \textbf{elementary}. Now the \textbf{elementarization} of a formula $F$ is the result of replacing in $F$ every surface occurrence of the form $G_1 \sqcap . . . \sqcap G_n$ or $\ada xG$ by $\top$, every surface occurrence of the form $G_1 \sqcup . . . \sqcup G_n$ or $\ade xG$ by $\perp$, every positive surface occurrence of each general atom by $\perp$, and every negative surface occurrence of each general atom by $\top$. Finally, a formula is said to be \textbf{stable} if and only if its elementarization is a valid formula of classical logic. It should be noted that the (non-choice) operators of CL4$^-$ behave exactly as in classical logic when they are contained within elementary formulas. 

\begin{definition}
With ${\cal P} \vdash F$ meaning ``from premise(s) ${\cal P}$ conclude $F$", CL4$^-$ is the system defined by the following rules of inference: 
\end{definition}
\textbf{Wait:} ${\vec{H}} \vdash F$, where $F$ is stable and $\vec{H}$ is a smallest set of formulas satisfying the following conditions: 
\begin{enumerate}
	\item[ $\bullet$] whenever $F$ has a surface occurrence of a subformula $G_1  \sqcap ... \sqcap G_n$, for each $i \in \{ 1,..., n\}$, $\vec{H}$ contains the result of replacing that occurrence     in $F$ by $G_i$;
	\item[ $\bullet$] whenever $F$ has a surface occurrence of a subformula $\ada x G(x)$, $\vec{H}$ contains the result of replacing that occurrence in $F$ by $G(y)$, where $y$ is a variable not occurring in $F$.
\end{enumerate}
\textbf{$\sqcup$-Choose:} $H \vdash F$, where $H$ is the result of replacing in $F$ a surface occurrence of a subformula $G_1 \sqcup ... \sqcup G_n$ by $G_i$ for some $i \in \{ 1,..., n\}$. 
\\\\
\textbf{$\ade$-Choose:} $H \vdash F$, where $H$ is the result of replacing in $F$ a surface occurrence of a subformula $\ade x G(x)$ by $G(t)$ for some term $t$ with no bound occurrences in $F$. 
\\\\
\textbf{Match:} $H \vdash F$, where $H$ is the result of replacing in $F$ two --- one positive and one negative --- surface occurrences of some $n$-ary general letter $P$ by an$n$-ary non-logical elementary letter $p$ that does not occur in $F$. 
\\\\
The preceding deductive system has no explicit set of axioms as stated, however, any formula derived by \emph{wait} from the empty set of premises can be viewed as such.  

\section{PSPACE Completeness}\label{proof}

In this section, we give our main result. Before doing so, let us turn our attention to the \emph{formula game}, a game played between two players (E and A) for a given fully quantified Boolean formula in prenex normal form. This game is covered in chapter 8 of \cite{Sipser}, but we shall describe it briefly here as well. A fully quantified Boolean formula in prenex normal form --- henceforth simply referred to as \textbf{QBF} --- is a formula in which all variables are quantified by either $\forall$ or $\exists$ and all such quantifiers appear at the beginning of the formula. In a play of the formula game for a given QBF $\phi$, players take turns removing the leftmost quantifier and corresponding variable in $\phi$ and replacing all occurrences of that variable by one of the truth values, ``1" (true) or ``0" (false). Here, Player E is responsible for removing all existential quantifiers and Player A is responsible for removing all universal quantifiers. The game ends when no quantifiers remain in the formula and the winner is determined by the truth of the resulting variable-free Boolean formula. If it evaluates to true, Player E is the winner. Otherwise, Player A has won. In complexity theory, this game is stated as a decision problem as follows, ``Does Player E have a winning strategy in the formula game for $\phi$?".  It is known \cite{Sipser} that this problem, being identical to ``Is $\phi$ true?", is PSPACE-complete. We employ this fact in our main theorem, constructing a polynomial time reduction from this problem to CL4$^-$-provability. Next we establish some conventions on the fully quantified Boolean formulas we deal with in the paper. 

We can safely assume, without loss of generality, that, in any QBF, the quantifiers strictly alternate, different occurrences of quantifiers bind different variables, and the first and last quantifiers are both $\exists$. In addition, we can also assume that the quantifier free section of the formula is in 3-cnf form. With any later reference to a QBF, it should be understood that this formula meets all of the preceding conditions. What follows is an auxiliary concept that will be employed in our PSPACE-completeness proof. 

\begin{definition}\label{strategyTree}
A \textbf{strategy tree} for a QBF $\phi$ is a tree that satisfies the following conditions:
\begin{enumerate}
	\item The tree has $n$ levels where $n$ is the number of quantifier occurrences in $\phi$.
	\item There is a single root node residing in level 1 of the tree. In odd levels of the tree, every non-leaf node should have exactly two children. Every non-leaf node in an even level of the tree should have exactly one child. 
	\item Every node in the tree is labeled with either a ``1" or an ``0". Furthermore, it is mandated that the nodes on every even level of the tree are given labels alternating between ``0" and ``1'' starting with the leftmost node labeled as ``0" and proceeding right. 
\end{enumerate}
\end{definition}

From our definition it should be clear that a strategy tree can be used to define a strategy for Player E in the formula game for $\phi$. Each path from root to leaf in the tree represents a play of the formula game in which the labels of odd level nodes are moves made by Player E and the labels of even level nodes are moves made by Player A. More formally, to define a play of the formula game for a particular path from root to leaf in a strategy tree,  let $x_1, x_2, ..., x_n$ be the sequence of labels on each of the nodes in this path, with $x_1$ being the label of the root node and $x_n$ being the label of the leaf node in that path. A play of the formula game for this path is then a traversal of this sequence from left to right in which the labels correspond to the moves by each player in the game. If the label is a ``1", then the play in the corresponding formula game is to remove the leftmost quantifier and variable and replace all occurrences of that variable by ``1" or ``true". Likewise if the label is an ``0", the play in the corresponding formula game is to remove the leftmost quantifier and variable and replace all occurrences of that variable by ``0" or ``false". We say that a strategy tree is winning, or simply refer to it as a \textbf{winning strategy tree} if and only if every path from root to leaf in the tree is a play of the formula game for $\phi$ won by Player E. With this definition in mind, the following lemma is straightforward. 

\begin{lemma}\label{strat}
Given any QBF $\phi$, Player E has a winning strategy in the formula game for $\phi$ iff there exists a winning strategy tree $\theta$ for $\phi$.
\end{lemma}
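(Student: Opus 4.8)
The plan is to establish both directions by making explicit the correspondence between E's strategies in the formula game and strategy trees, under which winning strategies correspond to winning strategy trees. Throughout I would use the standing conventions on QBFs: the quantifiers strictly alternate and the first and last are $\exists$, so the number $n$ of quantifier occurrences is odd, E moves at steps $1,3,\dots,n$ and A moves at steps $2,4,\dots,n-1$. I would also record the one routine observation that, in any strategy tree, if every odd-level non-leaf node is drawn with its ``0''-child to the left and its ``1''-child to the right, then condition 3 of Definition \ref{strategyTree} (the alternating even-level labels) holds automatically and imposes no further restriction; conversely condition 3 forces exactly this left/right arrangement. Hence a strategy tree is completely determined by the labels of its odd-level nodes, i.e.\ by E's moves.

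For the ($\Leftarrow$) direction, given a winning strategy tree $\theta$, I would define a strategy for E as follows. At the moment E must act --- removing the leftmost (existential) quantifier, which is the $(2k-1)$-st overall --- A has already made moves $a_1,\dots,a_{k-1}\in\{0,1\}$ for the variables of the $2$nd, $4$th, \dots, $(2k-2)$-nd quantifiers. These moves trace a unique path in $\theta$: start at the root (level $1$); from a node at level $2j-1$ descend to its child at level $2j$ whose label is $a_j$; from a node at level $2j$ descend to its unique child at level $2j+1$. This path reaches a unique node $\nu$ at level $2k-1$, and E plays the label of $\nu$. Any complete play consistent with this strategy is, by construction, the play of the formula game associated with some root-to-leaf path of $\theta$; since $\theta$ is winning, that play is won by E. Therefore this strategy is winning for E.

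For the ($\Rightarrow$) direction, given a winning strategy $\sigma$ for E, I would build $\theta$ by recursion on levels. The root (level $1$) is labeled with E's first move under $\sigma$. A node at an odd level $2k-1<n$ reached via A-moves $a_1,\dots,a_{k-1}$ is declared non-leaf and given two children at level $2k$, the left labeled ``0'' and the right labeled ``1''; the child labeled $b$ is given a single child at level $2k+1$ whose label is E's response under $\sigma$ when A has played $a_1,\dots,a_{k-1},b$. Odd-level nodes at level $n$ are leaves. One checks that this is a legitimate strategy tree (conditions 1--3, using the observation above), that it has exactly $n$ levels, and that every root-to-leaf path is precisely the play of the formula game for the sequence of E- and A-moves along it --- a play consistent with $\sigma$, hence won by E. Thus $\theta$ is a winning strategy tree.

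The content here is entirely the translation between the sequential description of a strategy and its tree unfolding, so I expect no real obstacle; the only thing requiring care is the indexing --- keeping ``step $m$ of the formula game'' aligned with ``level $m$ of the tree'' and verifying that the even-level alternation clause is a harmless normalization rather than a genuine constraint, which is where a hasty argument could slip.
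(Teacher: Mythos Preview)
Your argument is correct. The paper does not actually supply a proof of this lemma; it simply declares the statement ``straightforward'' in light of the definition of a strategy tree and moves on. Your write-up makes explicit exactly the correspondence the paper leaves to the reader: that the alternating-label clause on even levels is just a normalization of how A's two possible moves are drawn (left child $0$, right child $1$), so that a strategy tree is determined by the odd-level labels, i.e.\ by E's choices as a function of A's earlier moves. Both directions of your argument are sound, and the indexing---your one flagged worry---is handled correctly: level $m$ of the tree corresponds to the $m$-th quantifier, odd levels to E-moves and even levels to A-moves, consistent with the paper's convention that $\phi$ begins and ends with $\exists$ and alternates strictly.
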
 

The following technical lemma is also necessary. 

\begin{lemma}\label{stable} If $\Pi$ is a stable CL4$^-$-formula, $c$ is a constant, and $q$ is an elementary letter,  then the following CL4$^-$-formula $\psi$ is also stable: 

$$q(c) \vee ( \neg q(c) \wedge \Pi).$$
\end{lemma}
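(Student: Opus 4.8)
The plan is to reason directly from the definition of stability, i.e.\ to show that the elementarization of $\psi = q(c) \vee (\neg q(c) \wedge \Pi)$ is a classical tautology, using as a black box the hypothesis that the elementarization of $\Pi$ is a classical tautology. First I would observe how elementarization acts on $\psi$. Since $q$ is an elementary letter, the atoms $q(c)$ and $\neg q(c)$ are left untouched by elementarization (the clauses about general atoms do not apply), and the surface occurrences inside $\psi$ that get rewritten are exactly the surface occurrences that were already present in $\Pi$ (note that the leading $\vee$ and the $\wedge$ joining $\neg q(c)$ with $\Pi$ are not choice operators or quantifiers, so $\Pi$ sits at the surface level of $\psi$ and its surface occurrences coincide with those of $\psi$ restricted to the $\Pi$ part). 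Hence the elementarization of $\psi$ is literally $q(c) \vee (\neg q(c) \wedge \Pi')$, where $\Pi'$ denotes the elementarization of $\Pi$.

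Next I would carry out the purely propositional verification. We are given that $\Pi'$ is a classical tautology. I want to show $q(c) \vee (\neg q(c) \wedge \Pi')$ is a classical tautology. Treat $q(c)$ as a propositional variable $A$ and $\Pi'$ as a formula that is true under every assignment. Under any assignment, if $A$ is true then the left disjunct is true; if $A$ is false then $\neg q(c)$ is true and $\Pi'$ is true (it is a tautology), so the conjunction $\neg q(c) \wedge \Pi'$ is true and hence the disjunction is true. Either way $\psi$'s elementarization evaluates to true, so it is valid in classical logic, which is exactly what stability of $\psi$ requires.

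The only subtlety — and the step I would present most carefully — is the bookkeeping claim that elementarization commutes with the construction of $\psi$ in the sense described: that no new rewriting is triggered by wrapping $\Pi$ in $q(c)\vee(\neg q(c)\wedge -)$, and in particular that the elementarization of $\psi$ really is $q(c)\vee(\neg q(c)\wedge\Pi')$ rather than something where $\Pi$ has been altered differently because its occurrences are now at a different depth. This follows because ``surface occurrence'' is defined purely in terms of not lying in the scope of a choice connective or choice quantifier, and $\vee$, $\wedge$, $\neg$ are none of these; so a subformula occurrence is a surface occurrence of $\psi$ iff it is a surface occurrence of the corresponding part of $\psi$'s two immediate components, and on the $q(c)$, $\neg q(c)$ pieces elementarization is the identity since $q$ is elementary. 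Once this observation is in place the remainder is the trivial truth-table argument above, so I do not expect any real obstacle — the lemma is genuinely ``technical'' rather than deep, and its role is simply to feed into the later reduction.
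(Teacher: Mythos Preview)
Your proposal is correct and follows essentially the same approach as the paper: compute the elementarization of $\psi$ as $q(c)\vee(\neg q(c)\wedge \Pi_e)$ using that $q$ is elementary and that $\vee,\wedge$ do not block surface occurrences, then invoke the stability of $\Pi$ to see this is a classical tautology. The paper's proof is just a terser version of yours, collapsing your case analysis into the observation that the formula is equivalent to $q(c)\vee\neg q(c)$.
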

\begin{proof}
Let $\psi_e$ and $\Pi_e$ be the elementarizations of $\psi$ and $\Pi$ respectively. We need only show $\psi_e$ is valid in classical logic. Note that, since the letter $q$ is elementary, $ \psi_e = q(c) \vee ( \neg q(c) \wedge \Pi_e)$. The stability of $\Pi$ means that $\Pi_e$ is always true. Hence $\psi_e$ is simply equivalent to, $q(c) \vee \neg q(c)$ and, of course, is always true. 

\end{proof}

Here we give some additional terminology and conventions used in our proof. Let the \textbf{position} of a node in a tree be its distance from the leftmost node on the same level. Namely, on any level of a tree, the leftmost node is in position 0 and a node in position $n$ is $n$ places to the right of this node. The function QM($\omega$, $c$)  returns the formula that results from removing the leftmost quantifier and variable from the CL4$^-$-formula $\omega$ and replacing every occurrence of the quantified variable by the constant $c$. Finally, in what follows, we often simply say ``formula" in place of ``CL4$^-$-formula".  We are now ready to state our main result. Note that, in view of the soundness and completeness of CL4$^-$, the following theorem will hold with ``validity" in place of ``provability".

\begin{theorem}\label{completeness}
Deciding provability for CL4$^-$ is PSPACE-complete.
\end{theorem}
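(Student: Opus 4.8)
The plan is to prove PSPACE-completeness in two parts. Membership in PSPACE is already established in \cite{FTCII} (CL4$^-$ is decidable in polynomial space), so the entire burden is the hardness direction: give a polynomial-time reduction from the QBF truth problem (equivalently, from the question of whether Player E has a winning strategy in the formula game) to CL4$^-$-provability. Given a QBF $\phi = \exists x_1 \forall x_2 \exists x_3 \cdots \exists x_n\, \chi$ (where $\chi$ is in 3-CNF and the quantifiers strictly alternate starting and ending with $\exists$, per the established conventions), I would define in polynomial time a CL4$^-$-formula $\Phi$ such that $\Phi$ is provable iff $\phi$ is true, and then invoke Lemma~\ref{strat} to translate ``$\phi$ is true'' into ``there exists a winning strategy tree for $\phi$.''

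The core of the reduction is to mirror the formula game inside a CL4$^-$-formula using the choice operators. The natural idea is to replace each $\exists x_i$ by a choice disjunction $\sqcup$ (a move by $\top$, which plays the role of Player E), and each $\forall x_i$ by a choice conjunction $\sqcap$ (a move by $\perp$, playing Player A), over the two substitution instances ``$x_i := 1$'' and ``$x_i := 0$.'' However, because CoL's choice operators are propositional here and the $\ade/\ada$ quantifiers only allow substitution of a variable or constant, I would encode the truth value of $x_i$ by whether a distinguished elementary atom $q_i(c)$ (for a fixed constant $c$) is asserted or negated. Concretely, an $\exists$-layer becomes something like $\bigl(q_i(c) \vee (\neg q_i(c) \wedge \Phi_{i+1})\bigr) \sqcup \bigl(\neg q_i(c) \vee (q_i(c) \wedge \Phi_{i+1})\bigr)$ — or a cleaner variant thereof — where $\Phi_{i+1}$ is the translation of the rest of the formula, and dually a $\sqcap$ for the $\forall$-layers. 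The innermost formula $\Phi_{n+1}$ encodes the 3-CNF matrix $\chi$ as a classical propositional formula over the atoms $q_1(c),\dots,q_n(c)$. The purpose of Lemma~\ref{stable} becomes clear here: it guarantees that the building block $q(c) \vee (\neg q(c) \wedge \Pi)$ preserves stability, which is exactly what is needed so that the \textbf{Wait} rule can be applied at each choice node, and so that a winning strategy tree for $\phi$ can be transcribed level-by-level into a CL4$^-$ derivation (choice nodes for $\top$ handled by \textbf{$\sqcup$-Choose}, environment branching handled by \textbf{Wait} collecting both $\sqcap$-conjuncts, and the leaves closed by \textbf{Wait} from the empty premise set because the corresponding elementarization of $\chi$ is a classical tautology precisely when Player E wins that play).

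I would then argue both directions of the correctness of the reduction. For soundness of the reduction (winning strategy tree $\Rightarrow$ provability), I build the CL4$^-$ proof tree by structural recursion on the strategy tree $\theta$: at an odd level the label of the node tells us which $\sqcup$-disjunct to \textbf{$\sqcup$-Choose}; at an even level we apply \textbf{Wait}, whose premise set contains both $\sqcap$-conjuncts, and recurse into the unique child dictated by $\theta$ on whichever conjunct matches the child's label (using condition 3 of Definition~\ref{strategyTree} to line up the ``0''/``1'' labeling); stability at each intermediate formula is furnished by repeated application of Lemma~\ref{stable}, and the leaves are axioms via \textbf{Wait} because each root-to-leaf path is an E-win, making the matrix's elementarization a classical tautology. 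For completeness of the reduction (provability $\Rightarrow$ winning strategy tree, hence $\phi$ true), I analyze an arbitrary CL4$^-$-proof of $\Phi$ and read off a strategy tree: since $\Phi$ is not stable until the $\sqcup$-choices are resolved, any proof must resolve the outermost $\sqcup$ via \textbf{$\sqcup$-Choose} (this records Player E's move) or, if it is stable at some stage, the \textbf{Wait} rule forces branching on all $\sqcap$-conjuncts (recording all of Player A's replies); the \textbf{Match} and \textbf{$\ade$-Choose} rules can be shown to be irrelevant or eliminable on formulas of this restricted shape because there are no general atoms and no $\ade$-quantified subformulas. The main obstacle I anticipate is precisely this completeness direction: showing that \emph{every} CL4$^-$-proof of $\Phi$, not just the ``intended'' one, can be normalized so that it corresponds to a bona fide strategy tree — in particular ruling out that the prover does something clever with the elementary atoms $q_i(c)$ or with premature applications of \textbf{Wait} on a subformula that happens to be stable for the wrong reason. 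Handling this cleanly will likely require either choosing the encoding so that no subformula is stable except at the ``right'' moments, or a small normalization lemma about the structure of proofs of these particular formulas.
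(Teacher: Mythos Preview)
Your overall plan---cite \cite{FTCII} for membership, reduce QBF to CL4$^-$-provability for hardness, and use Lemma~\ref{strat} to mediate between ``$\phi$ true'' and ``winning strategy tree exists''---matches the paper's. But the concrete encoding you propose has a fatal size problem, and the fix is precisely the non-obvious idea the paper supplies.

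In your encoding, each quantifier layer is rendered as a binary choice whose two branches \emph{each} contain a full copy of $\Phi_{i+1}$:
\[
\bigl(q_i(c)\vee(\neg q_i(c)\wedge \Phi_{i+1})\bigr)\ \sqcup\ \bigl(\neg q_i(c)\vee(q_i(c)\wedge \Phi_{i+1})\bigr),
\]
and dually with $\sqcap$ for the universal layers. With $n$ quantifiers this yields a CL4$^-$-formula of size $\Theta(2^{n}\cdot|\chi|)$, so the map $\phi\mapsto\Phi$ is not polynomial-time and the reduction collapses. The phrase ``or a cleaner variant thereof'' does not rescue this: any encoding that realizes $\forall x_i$ as a $\sqcap$ over the two substitution instances of the body duplicates the body, and replacing $\sqcap$ by $\ada x_i$ does not help either, because \textbf{Wait} instantiates $\ada$ with a \emph{fresh variable}, not with the constants $0,1$, so the matrix literals can never collapse to tautologies at the leaves.

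The paper's reduction avoids duplication entirely. Existential quantifiers are kept as $\ade x$ (one copy of the body; $\ade$-\textbf{Choose} substitutes a constant). The crucial gadget is for $\forall x\,\Omega$, which is encoded as
\[
(G(0)\sqcap G(1))\ \vee\ \ade x\bigl(\neg G(x)\wedge \Omega\bigr)
\]
with a fresh \emph{general} letter $G$. Here the $\sqcap$ lives on a constant-size side formula, not on $\Omega$; the environment's choice of $G(0)$ or $G(1)$ forces $\top$ (on pain of instability downstream) to pick the matching constant in $\ade x$, after which \textbf{Match} pairs $G(c)$ with $\neg G(c)$ and Lemma~\ref{stable} applies. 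The matrix literals are likewise encoded with fresh general letters so that a literal becomes a tautology after \textbf{Match} exactly when the substituted constant makes it true. This is what keeps $f(\phi)$ polynomial in $|\phi|$ while still making every leaf of the proof tree classically valid iff the corresponding play is an E-win.

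A secondary divergence: you work with elementary letters only and declare \textbf{Match} irrelevant, whereas the paper's primary construction uses general letters and \textbf{Match} essentially (and then observes, as a corollary, that the same idea goes through with elementary letters for CL3$^-$). That difference is harmless in principle, but only once the size issue above is solved; your current encoding cannot be repaired merely by swapping elementary for general letters. Your instinct about the completeness direction---that one must argue no proof of $\Phi$ can do something ``clever'' at the wrong moment---is correct, and the paper handles it by a case analysis showing that at each layer only one rule is actually applicable (instability rules out \textbf{Wait} where a choice must be made, and a mismatch $G(c)$ vs.\ $\neg G(t)$ with $t\neq c$ makes the branch unprovable), but that analysis depends on the specific shape of the paper's gadget.
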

\begin{proof}
As noted earlier, determining validity (provability) of a CL4$^-$-formula was show in \cite{FTCII} to be decidable in polynomial space. To show that CL4$^-$-provability is PSPACE hard, we construct a polynomial time mapping reduction $f$ from formula game\footnote{The problem of existence of a winning strategy for Player E in the formula game.} to CL4$^-$-provability. The following steps are used to construct the CL4$^-$-formula $f(\phi)$ from an arbitary QBF $\phi$. Our construction maintains the record $\Theta$ initialized to $\phi$. At each step during the construction, $\Theta$ is updated to a smaller subformula of the original $\phi$ such that the changes made in all previous steps are not contained within the updated $\Theta$.     

\begin{enumerate}
	\item Let $\Omega, x$ be such that $\Theta = \exists x \Omega$. Replace $\Theta$ with the formula $\ade x\Omega $. Update $\Theta$ to $\Omega$. Repeat steps 2 and 3 until no quantifiers remain in $\Theta$.
	
	\item  Let $\Omega, x$ be such that $\Theta = \forall x \Omega$. Replace $\Theta$ with the formula $$(G(0) \sqcap G(1)) \vee \ade x(\neg G(x) \wedge \Omega)$$ where $G$ is any unary general letter not occurring elsewhere in $\Theta$. Update $\Theta$ to $\Omega$.
	
	\item Let $\Omega, x$ be such that $\Theta = \exists x \Omega$. Replace $\Theta$ by $\ade x\Omega$ and update $\Theta$ to $\Omega$.
	
	\item Now $\Theta$ is the quantifier free subformula of $\phi$. In it, replace every negative literal $\neg x$ by $(G(x) \vee \neg G(0))$ and every positive literal $x$ by $(G(x) \vee \neg G(1))$. In these replacements, G should be a unary general letter not occurring elsewhere in $\Theta$ and should be unique to each replacement. 

\end{enumerate}

For example, if $\phi$ is the QBF $\exists x \forall y \exists z ((\neg x \vee y \vee x ) \wedge (z \vee x \vee \neg z))$, then $f(\phi)$ is a CL4$^-$-formula that looks like $\ade x (P(0) \sqcap P(1)) \vee \ade y(\neg P(y) \wedge \ade z (((Q(x) \vee \neg Q(0)) \vee (R(y) \vee \neg R(1)) \vee (S(x) \vee \neg S(1)) ) \wedge ((T(z) \vee \neg T(1)) \vee (U(x) \vee \neg U(1)) \vee \neg (V(z) \vee \neg V(0))))$. 

It should be clear that our formula $f(\phi)$ can be constructed in polynomial time from the QBF $\phi$. What remains to show is that our function $f$ is in fact a mapping reduction. That is to say, Player E has a winning strategy in the formula game for $\phi$ if and only if $f(\phi)$ is provable in CL4$^-$. In the remainder of this proof, $\Theta$ is a CL4$^-$-formula, $G$ is a unary general letter and $g$ is from the elementary letters. Pick and fix an arbitrary QBF $\phi$.

``$\Rightarrow$" Assume Player E has a wining strategy in the formula game for $\phi$. By lemma \ref{strat} there exists a winning strategy tree, call it $\Delta$, for $\phi$. We use this tree to construct a CL4$^-$-proof of $f(\phi)$. This CL4$^-$-proof is represented as a tree of CL4$^-$-formulas, denoted $\Sigma$, where the root is the desired conclusion and each formula follows from its children by some rule of CL4$^-$. We usually see the nodes of this tree as the corresponding CL4$^-$-formulas, even though it should be remembered that the same formula may reside at different nodes. Let us also establish an association between the levels of $\Delta$ and the levels of $\Sigma$ that contain choice operators. To do this, we adjust the names of labels on the levels of $\Sigma$. Namely, let the root reside in level ``1"  of $\Sigma$ and let numerically increasing label names be placed on subsequent levels of the tree with the stipulation that even numbers are repeated three times. These three even levels are distinguished by placing a subscript on each number: ``t" on the top level, ``m" on the middle level and ``b" on the bottom level. For example, the names of the labels on the first four levels of $\Sigma$ would be $1$, $2_t$, $2_m$ and $2_b$. Once a level in $\Sigma$ is reached in which none of the formulas in that level contain choice operators, sequentially increasing numbers are placed as the names of labels of the remaining levels. Given this labeling scheme, any level in $\Sigma$ whose label contains the number $k$ (regardless of its subscript) is associated with the level in $\Delta$ that is also labeled as $k$. Intuitively, this association is done because, as will be seen shortly, moves by Player E (resp. A) in $\Delta$ will be ``mimicked" using one (resp. three) CL4$^-$ rule applications in $\Sigma$. All definitions previously established involving the levels of a tree are identical for these ``adjusted" levels. We will also often refer to a level (in $\Sigma$ or $\Delta$) simply by using its label.

The tree $\Sigma$ is constructed in a top down fashion, so initially place $f(\phi)$ at the root. Then we apply the following procedure. We will see shortly that it can be used to produce nodes in $\Sigma$ until no leaf nodes containing choice operators remain. Each step of the procedure deals with a formula in one of the following two forms: 

\begin{equation}\tag{3a} \Psi_1 ... \Psi_n (\ade x (\Theta))\Upsilon_1 ... \Upsilon_n \end{equation}
 
 \begin{equation}\tag{3b} \Psi_1 ... \Psi_n ((G(0) \sqcap G(1)) \vee \ade x (\neg G(x) \wedge \Theta))\Upsilon_1 ... \Upsilon_n \end{equation}
 where $n \in \mathbb{N}$, $x$ is some variable, every $\Upsilon_ i$ is a closing parenthesis, and every $\Psi_i$ is the (not well formed) expression $q_i(c_i) \vee ( \neg q_i(c_i) \wedge$ with $c_i \in \{0,1\}$ and $q_i$ being an elementary letter unique to each $i$. Formulas of the form (3a) will reside in odd levels of $\Sigma$ and formulas of the form (3b) will reside in levels $l_t$ for even $l$. We first observe that any formula of the form (3b) must be stable. This is so because, in the elementarization of 
 
\begin{equation}\tag{3c}
 ((G(0) \sqcap G(1)) \vee \ade x (\neg G(x) \wedge \Theta)),
\end{equation}the subformula $(G(0) \sqcap G(1))$ is replaced with
$\top$. This implies immediately that (3c) is stable. Now applying Lemma \ref{stable} $n$ times starting from formula (3c), we get that (3b) is indeed stable.

\emph{Procedure 1} - Let $\psi$ be a childless formula of the so-far constructed portion of $\Sigma$ containing choice operators. 

\emph{Case 1:} $\psi$ is of the from (3a). Let $l$ (for odd $l$) be the level of $\psi$ in $\Sigma$ and let $p$ be its position on that level. Further let $c$ be the label of the node in position $p$ on level $l$ in $\Delta$. Attach the formula QM($\psi$, $c$) as the child of $\psi$. Here $\psi$ follows from its child by $\ade$-\emph{choose}. 

\emph{Case 2:} $\psi$ is of the from (3b), residing in level $l_t$ (for even $l$) of $\Sigma$. Let $\psi_l$ be the formula that results from replacing the leftmost $\sqcap$-conjunction in $\psi$ by $G(0)$. Similarly let $\psi_r$ be the formula that results from replacing the leftmost $\sqcap$-conjunction in $\psi$ by $G(1)$. Attach $\psi_l$ and $\psi_r$ as the left and right children of $\psi$ respectively. As noted previously, $\psi$, residing in level $l_m$, is stable and hence it follows from its children by \emph{wait}. Further let $\psi_{LM} :=$ QM($\psi_l$, 0) be the child of $\psi_l$ and $\psi_{RM} :=$ QM($\psi_r$, 1) be the child of $\psi_r$. Both $\psi_l$ and $\psi_r$ follow from their respective children by $\ade$-\emph{choose}. Let the child of $\psi_{LM}$ be the formula that results from replacing, in $\psi_{LM}$, $G(0)$ by $g(0)$ and $\neg G(0)$ by $\neg g(0)$ for some elementary letter $g$ not occurring elsewhere in the formula. Similarly, let the child of $\psi_{RM}$ be the formula that results from replacing, in $\psi_{RM}$, $G(1)$ by $g(1)$ and $\neg G(1)$ by $\neg g(1)$. Both $\psi_{LM}$ and $\psi_{RM}$, residing in level $l_b$, follow from their respective children by \emph{match}.        

To see that \emph{Procedure 1} can be used to produce children for each formula containing choice operators in $\Sigma$, notice that, by our construction of $f(\phi)$, the root will be of the form (3a). Further, any formula of this form will have its child produced by \emph{case 1} of \emph{Procedure 1} and this child will take the form (3b). Formulas of this form will have their children produced by \emph{case 2} of \emph{Procedure 1} resulting in two nodes that are again of the form (3a). This cycle then repeats and thus \emph{Procedure 1} can be used to construct our tree into one in which every leaf node is a choiceless CL4$^-$-formula. Now the following procedure should be carried out on all leaf nodes containing two \textbf{literals} (atoms with or without negation) of the form $G(a)$ and $\neg G(b)$ for some general letter $G$ until no such leaf nodes remain.       

\emph{Procedure 2} - Let $\psi$ be a childless formula of the so far constructed portion of $\Sigma$ containing two literals of the form $G(a)$ and $\neg G(b)$. Attach, as the child of $\psi$, the formula that results from replacing in $\psi$ the literal $G(a)$ with $g(a)$ and the literal $\neg G(b)$ with $\neg g(b)$. Here $\psi$ follows from its child by \emph{match}.     

A this point, our construction of $\Sigma$ is complete. We now show that any leaf node $\beta$ can be justified by \emph{wait} from no premises. Notice, by the construction of $\Sigma$, $\beta$ will be of the form 

\begin{equation}\tag{3e}
\Psi_1 ... \Psi_n \Pi \Upsilon_1 ... \Upsilon_n
\end{equation}
where $n \in \mathbb{N}$, $\Pi$ is an elementary formula and both $\Upsilon_{i}$ and $\Psi_{i}$ are the same as in (3b). Notice that $\Pi$ is indeed elementary because \emph{Procedure 2} eliminates all general letters, as each such general letter in $f(\phi)$ always has exactly two --- one positive and one negative --- occurrences. Applying Lemma \ref{stable} $n$ times starting from $\Pi$, one can see that $\beta$ is stable as long as $\Pi$ is stable. Here stability simply means the same as classical validity as we deal only with elementary formulas. Thus, we need only show that $\Pi$ is classically valid, after which, it follows that $\beta$ is a consequence of \emph{wait} from no premises. 

Now $\Pi$ is what becomes of the quantifier free section of $\phi$ after applying to $\phi$ the function $f$ and carrying out some sequence of \emph{Procedures} \emph{1} and \emph{2} until no quantifiers remain and no applications of \emph{match} are possible. Thus $\Pi$ is a choiceless formula in conjunctive normal form. Recall from the definition of $f(\phi)$ that any positive literal $x$ in $\phi$ is replaced by $G(x) \vee \neg G(1)$ and any negative literal $\neg x$ is replaced by $G(x) \vee \neg G(0)$. Further observe that, by the construction of $\Sigma$, $\Pi$ will have each variable replaced by a constant 0 or 1 and these values will match the value in the other disjunct of $G(x) \vee \neg G(0)$ or $G(x) \vee \neg G(1)$ exactly when the variable $x$ is replaced by a value that makes the corresponding literal $x$ or $\neg x$ true in $\phi$. Each such subformula $G(a) \vee \neg G(a)$ will have its general letters reduced to elementary letters by \emph{Procedure 2}, after which it will become a tautology. That is, if selections by Player E and Player A in the formula game for $\phi$ make the resulting formula true, then $\Pi$ will be a tautology, because every true literal in such a variable-free Boolean formula is replaced by a tautology $g(a) \vee \neg g(a)$. But $\Sigma$ is constructed in such a way that the replacements of variables in $f(\phi)$ (and hence $\Pi$) correspond to some path from root to leaf in $\Delta$. Because $\Delta$ is a winning strategy tree, this sequence of moves in the formula game for $\phi$ makes its resulting variable-free Boolean formula true, and hence the formula $\Pi$ valid in classical logic.      

``$\Leftarrow$" 
Assume $f(\phi)$ is provable in CL4$^-$. This means there exists a proof tree, call it $\Sigma$, that proves $f(\phi)$. We use $\Sigma$ to construct a winning strategy tree $\Delta$ for $\phi$. Our previous conventions regarding the ``adjusted" labeling scheme for $\Sigma$ is also used here. 

We now establish some necessary facts about the formulas in $\Sigma$ that contain choice operators. Such formulas always take one of the following four forms:

\begin{equation}\tag{3e} \Psi_1 ... \Psi_n (\ade x (\Theta))\Upsilon_1 ... \Upsilon_n \end{equation}
 
 \begin{equation}\tag{3f} \Psi_1 ... \Psi_n ((G(0) \sqcap \neg G(1)) \vee \ade x (\neg G(x) \wedge \Theta))\Upsilon_1 ... \Upsilon_n \end{equation} 
 
 \begin{equation}\tag{3g} \Psi_1 ... \Psi_n ((G(c)) \vee \ade x (\neg G(x) \wedge \Theta))\Upsilon_1 ... \Upsilon_n \end{equation}
 
 \begin{equation}\tag{3h} \Psi_1 ... \Psi_n ((G(c)) \vee  (\neg G(c) \wedge \Theta))\Upsilon_1 ... \Upsilon_n \end{equation}
where (here and below) $n \in \mathbb{N}$, $c$ is a constant, $\Theta$ is a CL4$^-$-formula and both $\Upsilon_{i}$ and $\Psi_{i}$ are the same as in (3b). Furthermore, non-choiceless formulas on odd levels take the form (3e), non-choiceless formulas on even levels with a ``t" subscript take the form (3f), non-choiceless formulas on even levels with a ``m" subscript take the form (3g) and non-choiceless formulas on even levels with a ``b" subscript are of the form (3h). In the following, we justify the preceding claim.   

Notice that in $\Sigma$, a formula of the form (3e) resides at the root, i.e. in level 1, and the only possible CL4$^-$ justification rule that can be used to derive it is $\ade$-\emph{choose}. This is because there is a single surface occurrence of the choice quantifier $\ade$. Note that \emph{match} is not a possible derivation rule as there are no surface occurrences of general atoms and \emph{wait} is not a possible derivation rule as the formula is not stable. This formula must have a single child formula $\psi$ residing in the adjusted level $2_t$ and will be of the form (3f). The form (3f) is identical to (3b) and hence it is stable. Thus the possible derivation rules for $\psi$ are either \emph{wait} or $\ade$-\emph{choose}, as its only surface occurrences occur in a subformula of $\psi$ that looks like $\psi_s :=((G(0) \sqcap \neg G(1)) \vee \ade x (\neg G(x) \wedge \Theta))$.

We show that a derivation by $\ade$-\emph{choose} is not possible. Assume for a contradiction that $\psi$ follows from a single child formula by $\ade$-\emph{choose}. The form of this child must be identical to its parent with the exception that $\psi_s$ from $\psi$ is now a subformula of the form $((G(0) \sqcap \neg G(1)) \vee (\neg G(t) \wedge \Theta))$ for some term $t$. The derivation of this formula must then contain two children in which $\psi_s$ (and only $\psi_s$) is further changed to $((G(0)) \vee (\neg G(t) \wedge \Theta))$ and $((G(1)) \vee (\neg G(t) \wedge \Theta))$ respectively.  This is because an application of \emph{wait} that eliminates the choice operator $\sqcap$ must be used somewhere is this derivation. In at least one of these formulas, we have $t \neq 0$ or $t \neq 1$. It follows that a formula of the form

$$\Psi_1 ... \Psi_n ((G(c)) \vee  (\neg G(t) \wedge \Theta))\Upsilon_1 ... \Upsilon_n$$
for $t \neq c$, follows by \emph{wait} from no premises  after some sequence of rules are applied to it. Further, any sequence of rules applied to this formula cannot change its ($\wedge, \vee$)-structure and only \emph{match} can alter the literals $G(c)$ and $\neg G(t)$ after which they will become elementary. In fact, \emph{match} must indeed be applied here for otherwise $G(c)$ and $\neg G(t)$ would both be interpreted as $\perp$ in their elementarization and the formula would be underivable. Thus, for any formula matching the above form to be a conclusion of \emph{wait}, its derivation must at some point contain the subformula $((g(c)) \vee  (\neg g(t) \wedge \Theta))$. This subformula will be conjuncted with the remainder of the formula and hence it must follow by \emph{wait} from no premises, i.e. its elementarization must be valid in classical logic. Since $c$ and $t$ differ, there exists an interpretation $^*$ that interprets both literals $g(c)$ and $\neg g(t)$ as $\perp$. Under this interpretation, the overall formula cannot be true regardless of truth value for the elementarization of $\Theta$. This contradicts the fact that $f(\phi)$ is provable and hence $\psi$ cannot be derived by $\ade$-\emph{choose}. 

Having only one possible derivation rule, $\psi$ must follow by \emph{wait} and will have two children matching the form (3g) for $c=0$ and $c=1$ respectively. We can assume here that the subformula containing $G(0)$ is the left child and the subformula containing $G(1)$ is the right child. If this is not the case, then an equivalent proof can easily be constructed for which it is true. Each of these formulas resides in level $2_m$ and cannot be derived by \emph{match} or \emph{wait}. This is because formulas of the form (3g) are not stable and cannot contain both positive and negative surface occurrences of a general letter. However, each of these formulas has a single surface occurrence of the quantifier $\ade$, and hence both formulas must be derived by $\ade$-\emph{Choose}. This results in two additional children of the form (3h) that must reside in level $2_b$. Formulas that match the form (3h) cannot be derived by \emph{wait} as they are again not stable. This leaves $\ade$-\emph{Choose} and \emph{match} as the only two possible derivation rules for the formulas in level $2_b$. If the formulas are derived by $\ade$-\emph{Choose}, then \emph{match} must be the next derivation applied to the resulting formulas. Further, the order in which \emph{match} and $\ade$-\emph{Choose} are applied to these formulas is of no importance and we can therefore assume w.o.l.o.g. that \emph{match} is the next rule applied. This leaves level 3 of $\Sigma$ with two formulas matching the form (3e). The preceding sequence of derivations will repeat in every branch of the tree until an odd level contains a choiceless formula. Furthermore, every formula in this level will be choiceless and any formula below this level will also be choiceless. This concretely establishes our stipulations regarding the form of the formulas with choice operators on each level of $\Sigma$. 

Now let $\Delta$ be an $n$ level tree, where $n$ is the number of $\ade$ quantifiers in $f(\phi)$ such that $\Delta$ meets all the conditions of definition \ref{strategyTree} with the exception that nodes in odd levels are not yet labeled. To produce these labels, carry out the following procedure for each node residing in an odd level of $\Delta$.

\emph{Procedure 3} - Let $\eta$ be a node in position $m$ on level $l$ of $\Delta$. Further let $\psi$ be the formula in position $m$ on level $l$ of $\Sigma$. Because it resides in an odd level, $\psi$ must be of the form (3e) and as before can only be derived by $\ade$-\emph{choose} from a single child. If this child is the formula QM($\psi$,$1$), label $\eta$ with a ``1". Otherwise, label $\eta$ with a ``0".  

Clearly, $\Delta$ now meets all the conditions of definition \ref{strategyTree}. It remains only to see that $\Delta$ is a winning strategy tree. Let $y_1, y_2, ... , y_n$ be the labels of an arbitrary sequence of nodes from root to leaf in $\Delta$ and let $\beta$ be the formula that results from making these moves, i.e. selecting these constants in the formula game for $\phi$. Further, let $\{v_1, v_2, ... \}$ be the set of variables from $\phi$ to which ``1" was assigned in the preceding play of the formula game for $\phi$. For this set, there exits a path $z_1, z_{2_t}, z_{2_m}, z_{2_b},..., z_n$ in $\Sigma$ from the root to a leaf in an odd level $n$ such that the formula $\psi$ in the label of this node has a value of ``1" for all variables from the set $\{v_1, v_2, ... \}$. To the remainder of the variables in this formula will be assigned terms other than the constant ``1". Further, $\psi$ does not contain choice operators and it must be a consequence of \emph{wait} from no premises as $f(\phi)$ is provable in CL4$^-$. For simplicity, we may assume that $\psi$ does not contain general letters either, with all pairs $G(c)$ and $\neg G(d)$ replaced by $g(c)$ and $\neg g(d)$. So as a conclusion of \emph{wait}, $\psi$ is a classical valid formula. Observe now that $\beta$ is true. To see this, assume for a contraction that $\beta$ is false. For reasons similar to those pointed out in the (``$\Rightarrow$") part of our proof, every false literal of $\beta$ is replaced (in $\psi$) by $g(c) \vee \neg g(d)$ with $c \neq d$. Since $g$ occurs only in these two places, one can always select a model that makes all such subformulas $g(c) \vee \neg g(d)$ simultaneously false. With this model in mind, every false literal of $\beta$ is replaced by a false formula, and hence $\psi$ is false, contradicting the fact that $\psi$ is classically valid. That is, every path from root to leaf in $\Delta$ corresponds to a sequence of moves in the formula game for $\phi$ such that Player E is the winner and by definition this means that $\Delta$ is a winning strategy tree. To officially complete this direction, by lemma \ref{strat}, Player E has a winning strategy in the formula game for $\phi$.

\end{proof}

Before giving some additional facts, we reproduce the logic CL3 from \cite{FTCI}. This logic is the predecessor of CL4 and can be understood simply as its elementary-base fragment. The language and deductive machinery for CL3 is almost identical to that of CL4, with the restriction that it does not contain general letters and \emph{match} is not among its rules of inference. Again we use the notation CL3$^-$ to denote the $\forall, \exists$-free fragment of CL3. As shown in \cite{FTCI}, CL3$^-$ is also sound and complete (in the same sense as CL4) with respect to the semantics of CoL. Thus, the following two facts will also hold with ``validity" in place of ``provability".

\begin{fact}
Deciding provability for the general-base fragment of CL4$^-$ is PSPACE-complete. 
\end{fact}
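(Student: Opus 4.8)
The plan is to reuse essentially all the work already done for Theorem~\ref{completeness}. Membership in PSPACE is immediate: the general-base fragment of CL4$^-$ is syntactically a subclass of CL4$^-$, and deciding provability of an arbitrary CL4$^-$-formula is in PSPACE by \cite{FTCII}; restricting the admissible inputs can only make the problem easier, so deciding provability of general-base formulas is in PSPACE as well.

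For PSPACE-hardness, the key observation is that the reduction $f$ constructed in the proof of Theorem~\ref{completeness} already lands inside the general-base fragment. Inspecting the four construction steps, the formula $f(\phi)$ is assembled exclusively from the parallel connectives $\vee, \wedge$, the choice operators $\sqcap$ and $\ade$, negations applied only to (general) atoms, and occurrences of unary general letters (one fresh copy per replacement); no non-logical elementary letter ever appears in $f(\phi)$. The elementary letters $q_i$ and $g$ that figure in the argument arise only inside the proof tree $\Sigma$ --- they are introduced by \emph{match} during the construction of a CL4$^-$-derivation --- and never in the formula $f(\phi)$ itself. The example $f(\phi)$ displayed after the construction confirms this: it mentions only the general letters $P, Q, R, S, T, U, V$.

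Consequently, $f$ is already a polynomial-time mapping reduction from the formula game (the problem of whether Player E has a winning strategy, PSPACE-complete by \cite{Sipser}) to provability in the general-base fragment of CL4$^-$: by the equivalence established in Theorem~\ref{completeness}, Player E has a winning strategy in the formula game for $\phi$ iff $f(\phi)$ is provable in CL4$^-$, and $f(\phi)$ is a general-base formula. Hence the general-base fragment is PSPACE-hard, and together with the membership observation, PSPACE-complete.

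There is essentially no obstacle here --- the whole content of the fact is the single syntactic sanity check that $f(\phi)$ never uses a non-logical elementary letter, which is visible directly from the definition of $f$. The only point to be slightly careful about is to confirm that all the ``auxiliary'' elementary letters ($q_i$, $g$) that pervade the correctness argument are confined to $\Sigma$ and do not leak into the target formula, and that the reduction remains correct when we regard its range as the general-base fragment rather than all of CL4$^-$ --- which it trivially does, since provability is the same relation either way.
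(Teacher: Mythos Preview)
Your proposal is correct and matches the paper's own proof, which consists of the single observation that the reduction $f$ from Theorem~\ref{completeness} already outputs a general-base formula. Your version is simply a more explicit unpacking of that same point (including the PSPACE membership remark and the check that the elementary letters $q_i,g$ occur only in the proof tree $\Sigma$, not in $f(\phi)$).
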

\begin{proof}
In fact, our proof of the preceding theorem was a proof of the present fact, as the formula $f(\phi)$ is general base. 
\end{proof}

\begin{fact}
Deciding provability for CL3$^-$ is PSPACE-complete. 
\end{fact}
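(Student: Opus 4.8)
The plan is to reuse, almost verbatim, the apparatus of Theorem \ref{completeness}, changing only the target logic and the non-logical vocabulary of the reduction. Membership in PSPACE is immediate: since CL3$^-$ is the elementary-base fragment of CL4$^-$, its formulas are exactly the elementary-base CL4$^-$-formulas, and on such formulas the rule \emph{match} is never applicable; hence the polynomial-space decision procedure for CL4$^-$ of \cite{FTCII}, restricted to elementary-base input, already decides CL3$^-$ (this also follows from the decidability analysis of CL3 in \cite{FTCI}). So the content of the Fact is PSPACE-hardness.

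For hardness I would define a polynomial-time mapping reduction $f'$ from the formula game to CL3$^-$-provability by taking the function $f$ of Theorem \ref{completeness} and replacing every unary general letter $G$ throughout by a fresh unary \emph{elementary} letter $g$ (still unique to each occurrence). Thus Step 2 of the construction produces $(g(0) \sqcap g(1)) \vee \ade x(\neg g(x) \wedge \Omega)$, and Step 4 replaces each positive literal $x$ of the matrix by $g(x) \vee \neg g(1)$ and each negative literal $\neg x$ by $g(x) \vee \neg g(0)$. Each such $f'(\phi)$ is a well-formed elementary-base CL3$^-$-formula computable from $\phi$ in polynomial time, and since elementary letters behave classically, a subformula $g(c) \vee \neg g(d)$ is classically valid precisely when $c = d$ --- exactly the property enjoyed by the match-reduced literals $g(c) \vee \neg g(d)$ in the original proof. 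Consequently the ``$\Rightarrow$'' direction transfers with no change: a leaf of the constructed proof tree is a $\Psi$-prefixed choiceless elementary formula $\Pi$ in conjunctive normal form, Lemma \ref{stable} peels off the $\Psi$-prefixes, and $\Pi$ is classically valid iff the assignment read off the corresponding root-to-leaf path through the winning strategy tree satisfies the matrix of $\phi$. The three CL3$^-$ rule applications simulating each $\forall$-move are, as before, one \emph{wait} on the $\sqcap$ and two $\ade$-\emph{choose} steps, the two \emph{match} steps of the CL4$^-$ proof being now simply absent.

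The one place needing genuine re-verification is the argument in the ``$\Leftarrow$'' direction that pins down the shape of every choice-bearing formula of a CL3$^-$-proof of $f'(\phi)$ and forces the $\ade$-choice in each $\forall$-gadget branch to match the environment's $\sqcap$-selection. In the CL4$^-$ proof this was carried out through the elementarization behaviour of a general letter and an appeal to \emph{match}; here I would replace that device by the corresponding non-stability observation. The odd-level forms $\Psi_1\dots\Psi_n(\ade x(\Theta))\Upsilon_1\dots\Upsilon_n$ are not stable and contain no general letters, so $\ade$-\emph{choose} is the only applicable rule; likewise the branch $g(i)\vee\ade x(\neg g(x)\wedge\Theta)$ obtained after a \emph{wait} on the $\sqcap$ is not stable, forcing an $\ade$-\emph{choose}, and if that step picked a constant $c\neq i$ then --- since $g$ is elementary, occurs only in $g(i)$ and $\neg g(c)$, and all remaining choice operators lie inside $\Theta$ --- every leaf below would be a choiceless formula of the form $g(i)\vee(\neg g(c)\wedge\Theta'')$, which is falsified by the interpretation sending $g(i)$ to false and $g(c)$ to true and is therefore not stable, contradicting provability; hence $c=i$. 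Reading Player E's moves off the forced odd-level $\ade$-choices then yields a strategy tree, and running the ``$\Rightarrow$'' leaf/CNF argument in reverse (exactly as in Theorem \ref{completeness}) shows it is winning, so by Lemma \ref{strat} Player E has a winning strategy in the formula game for $\phi$. The main obstacle is precisely assembling this ``dictionary'': checking that every invocation of \emph{match} or of the general-letter elementarization in the proof of Theorem \ref{completeness} has a faithful substitute phrased in terms of (non-)stability of elementary-base formulas; once that is in hand, the Fact is a transcription of the Theorem.
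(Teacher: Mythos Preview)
Your proposal is correct and follows essentially the same approach as the paper: the paper's own proof of this Fact is a two-sentence sketch saying to rerun Theorem~\ref{completeness} with every general letter $G$ replaced by an elementary letter $g$ and with the (now inapplicable) \emph{match} steps excised, which is precisely the ``dictionary'' you build. Your write-up is in fact more detailed than the paper's, in that you spell out why the non-stability of $g(i)\vee(\neg g(c)\wedge\Theta'')$ for $c\neq i$ serves as the elementary-base surrogate for the \emph{match}/elementarization argument in the ``$\Leftarrow$'' direction.
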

\begin{proof}
This follows directly from the proof of theorem \ref{completeness} with the following modifications. In the description of the function $f$, every reference to a general atom $G$ should be replaced by a reference to an elementary atom, $g$. Additionally, some straightforward modifications to both directions of the mapping reduction are required to eliminate references to general atoms and to remove the now unnecessary and invalid applications of the \emph{match} rule.   
\end{proof}

It is important to mention here that the language of the logic CL4$^-$ contains but a tiny fragment of the possible operators in Computability Logic. As a direct consequence of our result, deciding validity for more expressive CoL based languages containing the language of CL4$^-$ as a subset will be at least PSPACE-hard. Of course, any CoL based logic whose language contains the quantifiers of classical first order logic, such as CL4, is already undecidable. What is more interesting, and at this point unknown, is the complexity of deciding validity for the various non-trivial fragments of Computability Logic smaller than CL4$^-$. These include the logics CL1, CL2, CL5 and CL6. The first two, studied in \cite{PCLI, PCLII}, are nothing but the propositional fragments of CL3$^-$ and CL4$^-$ respectively. CL6, studied in \cite{ICC, CC}, is the choiceless fragment of CL2, and CL5, studied in \cite{ICC}, is the general-base fragment of CL6.

\end{document}